%
\documentclass[runningheads]{llncs}
\usepackage[T1]{fontenc}
%
\usepackage{graphicx}
%
%

\usepackage{amsmath,amssymb,amsfonts}%
 \usepackage{MnSymbol}
\usepackage{xcolor}%

\begin{document}
\title{Torsion of $\alpha$-connections on the density manifold}
%
%
\author{Nihat Ay \inst{1,2,3}\orcidID{0000-0002-8527-2579} \and
Lorenz J. Schwachh\"ofer \inst{4}\orcidID{0000-0002-4268-6923}}

\authorrunning{N. Ay and L.J. Schwachh\"ofer}
%
\institute{Institute for Data Science Foundations, Hamburg University of Technology, Blohmstra{\ss}e 15, 21073 Hamburg, Germany
\email{nihat.ay@tuhh.de}
\and
Santa Fe Institute, 1399 Hyde Park Road, Santa Fe, NM 87501, USA
\and Faculty of Mathematics and Computer Science, Leipzig University, Augustusplatz 10, 04109 Leipzig, Germany
\\
\and Department of Mathematics, TU Dortmund University, Vogelpothsweg 87, 44221 Dortmund, Germany
\email{lschwach@math.tu-dortmund.de}}
\maketitle              
\begin{abstract}
We study the torsion of the $\alpha$-connections defined on the density manifold in terms of a regular Riemannian metric. 
In the case of the Fisher-Rao metric our results confirm the fact that all  $\alpha$-connections are torsion free. For the $\alpha$-connections obtained by the 
Otto metric, we show that, except for $\alpha = -1$, they are not torsion free. 

\keywords{Otto metric  \and Fisher-Rao metric \and Wasserstein geometry \and $\alpha$-connections.}
\end{abstract}
\section{Introduction}
Classical information-geometric structures are agnostic to the geometry of the sample space $M$ which is given, for instance, in terms of a Riemannian metric 
${\rm g}$. To address this problem, many works within information geometry have tried to incorporate Wasserstein geometry \cite{KZ22}. In this context, the so-called Otto metric has been studied as an alternative to the Fisher-Rao metric \cite{Otto}.
This line of research has been recently extended by the definition of an exponential connection as the dual of the mixture connection with respect to the Otto metric \cite{Ay}. It was conjectured that the new exponential connection is torsion free, which would directly imply counterparts of important results within the classical theory, such as the generalised Pythagorean theorem in terms of a canonical contrast function.

In the present article, we disprove this conjecture by deriving an explicit formula for the torsion of the new exponential connection. In fact, we study the torsion of all $\alpha$-connections associated with some regular Riemannian metric on the density manifold. For $\alpha = 0$, we do not recover the Levi-Civita connection whenever the connection for $\alpha = 1$ is not torsion free.

\section{The affine structure of $P^\infty_+(M)$}

We consider a finite-dimensional compact connected $C^\infty$-manifold $M$ without boundary. A {\em smooth measure on $M$ }is a measure which is induced by some smooth Riemannian metric on $M$. By compactness of $M$, such a measure is finite, and we are particularly interested in the space $P_+^\infty(M)$ of smooth probability measures.

Picking some $\mu \in P_+^\infty(M)$, we may identify the {\em vector space $S^\infty(M)$ of smooth signed measures on $M$} with the space $C^\infty(M$) of arbitrarily often differentiable real valued functions on $M$. This is done by the linear isomorphism
\begin{equation} \label{define-imath-mu}
\imath_{\mu}: C^\infty(M) \longrightarrow S^\infty(M), \qquad f \longmapsto f\; \mu,
\end{equation}
whose inverse is given by the Radon-Nikodym derivative w.r.t. $\mu$. We also consider the codimension one subspaces
\begin{eqnarray} \label{S0-infty}
      {S}^\infty_0(M) \; & := & \; \left\{ \nu \in S^\infty(M) \; : \;  \int_M \nu = 0 \right\},\\
      \nonumber C^\infty_0(M, \mu) \; & := & \; \left\{ f \in C^\infty(M) \; : \; \int_M f\; d\mu = 0 \right\},
\end{eqnarray}
so that obviously, $\imath_{\mu}$ from (\ref{define-imath-mu}) restricts to an isomorphism of $C^\infty_0(M, \mu)$ and $S^\infty_0(M)$. Morover, $P^\infty_+(M)$ is the image of the intersection of the affine hyperplane $1 + C^\infty_0(M, \mu)$ with the convex open set of positive functions.

As $C^\infty(M)$ has a canonical Fr\'echet space structure \cite{KM97}, it follows that $\imath_{\mu}$ induces a Fr\'echet space structure on $S^\infty(M)$ and $S^\infty_0(M)$, and $P^\infty_+(M)$ has the structure of an open convex set in an affine hyperplane. It is straightforward to verify that these Fr\'echet structures are independent of the choice of $\mu$. Therefore, there is a canonical identification
\begin{equation} \label{tangent-space}
T_\mu P^\infty_+(M) \cong S^\infty_0(M), \qquad TP^\infty_+(M) \cong P^\infty_+(M) \times S^\infty_0(M),
\end{equation}
and the canonical flat connection on the vector space $S^\infty(M)$ restricts to a canonical flat affine connection $\nabla^{({\rm m})}$ on $P^\infty_+(M)$, called the {\em mixture connection}. It is characterized by the fact that the constant vector fields $(A_\mu) := (\mu, \nu_0)$ are $\nabla^{({\rm m})}$-parallel for each $\nu_0 \in S^\infty_0(M)$. In particular, the $\nabla^{({\rm m})}$-geodesics are the straight line segments, and due to the convexity, any two elements in $P^\infty_+(M)$ can be joined by a unique $\nabla^{({\rm m})}$-geodesic.

\section{Regular metrics on $P^\infty_+(M)$}

A {\em Riemannian metric on $P^\infty_+(M)$ }is a collection of inner products ${\mathcal G} =({\mathcal G}_\mu)_{\mu \in P^\infty_+(M)}$ on each $T_\mu P^\infty_+(M) \cong S^\infty_0(M)$ which depends smoothly on $\mu$, in the sense that smooth vector fields $A, B$ on $P^\infty_+(M)$ (i.e., smooth families $A = (A_\mu)_{\mu \in P^\infty_+(M)}$, $B = (B_\mu)_{\mu \in P^\infty_+(M)} \in S^\infty_0(M)$) yield a smooth function on $P^\infty_+(M)$
\[
\mu \longmapsto {\mathcal G}_\mu(A_\mu, B_\mu).
\]

\

\noindent {\bf Example: The Fisher-Rao metric}

Given $\mu \in P^\infty_+(M)$, we use (\ref{define-imath-mu}) to identify $T_\mu P^\infty_+(M) \cong S^\infty_0(M) \cong C^\infty_0(M, \mu) \subset L^2(M, \mu)$, where the latter is the Hilbert space of square integrable functions. The {\em Fisher-Rao metric on $P^\infty_+(M)$ }is then the restricton of the inner product on $L^2(M, \mu)$ to $C^\infty_0(M, \mu)$ under this identification. That is,
\begin{equation} \label{FR}
\llangle A, B \rrangle^{\rm FR}_\mu := \int_M \frac{dA_\mu}{d\mu} \; \frac{dB_\mu}{d\mu}\; d\mu.
\end{equation}

\

The Fisher-Rao metric has very natural properties; e.g. it is invariant under diffeomorphisms of $M$ and is in fact (up to constant rescaling) the only Riemannian metric on $P^\infty_+(M)$ with this property \cite{BBM16}.

However, there are other important choices of such metrics, for instance the {\em Otto metric }which we shall discuss in more detail in section \ref{section-Otto}. For the moment, we shall stay fairly general in our discussion, but always keep the Otto metric in mind as our main application.

\begin{definition} \label{def-regular} A {\em regular Riemannian metric on $P^\infty_+(M)$ }is a metric of the form
\begin{equation} \label{regular-metric}
{\mathcal G}_\mu(A, B) := \llangle \Phi^{\mathcal G}_\mu (A_\mu), B_\mu \rrangle^{\rm FR}_\mu,
\end{equation}
where $\Phi^{\mathcal G}_\mu: S^\infty_0(M) \to S^\infty_0(M)$ is an automorphism of Fr\'echet spaces.
\end{definition}

Evidently, this is only well defined if for each $\nu \in S^\infty_0(M)$ the expression $(\Phi^{\mathcal G}_\mu)(\nu)$ depends smoothly on $\mu$, and if
\begin{equation} \label{eq:selfadjoint}
\llangle \Phi^{\mathcal G}_\mu (\nu^1), \nu^2 \rrangle^{\rm FR}_\mu = \llangle \Phi^{\mathcal G}_\mu (\nu^2), \nu^1 \rrangle^{\rm FR}_\mu, \qquad \llangle \Phi^{\mathcal G}_\mu (\nu), \nu \rrangle^{\rm FR}_\mu > 0, \; \nu \neq 0.
\end{equation}

We also shall use the notation for the G\^ateaux derivative
\begin{equation} \label{direction-deriv}
\partial_{\nu^1} \Phi^{\mathcal G}_\mu (\nu^2) := \left. \frac d{dt} \right|_{t=0} \Phi^{\mathcal G}_{\mu + t \nu^1} (\nu^2) \in S^\infty_0(M).
\end{equation}

\section{The $\alpha$-connections of a regular metric on $P^\infty_+(M)$}

For a Riemannian metric ${\mathcal G}$ on $P^\infty_+(M)$ its {\em covariant mixture derivative $\nabla^{({\rm m})} {\mathcal G}$} is defined as the tensor
\begin{eqnarray} \label{Amari-Tensor}
{\mathcal A}^{\mathcal G}(A; B, C) \; & := & \; - \left( \nabla^{({\rm m})}_A {\mathcal G} \right) (B, C)\\
\nonumber \; & = & \; - A \left( {\mathcal G}(B, C) \right) + {\mathcal G}\left( \nabla^{({\rm m})}_A B, C \right) + {\mathcal G} \left( B, \nabla^{({\rm m})}_A C \right).
\end{eqnarray}

It is straightforward to verify that this expression is tensorial in all entries, so that ${\mathcal A}^{\mathcal G}(A; B, C)_\mu$ only depends on $A_\mu, B_\mu$ and $C_\mu$. Furthermore, it is obviously symmetric in the last two entries.

\begin{lemma}
We have the identity
\begin{equation} \label{define-K}
{\mathcal A}^{\mathcal G}(A; B, C) = {\mathcal G}\left({\mathcal K}^{\mathcal G}(A, B), C\right),
\end{equation}
where
\begin{equation} \label{define-K2}
{\mathcal K}^{\mathcal G}(A, B)_\mu = (\Phi^{\mathcal G}_\mu)^{-1} \left(\frac{d{A_\mu}}{d\mu} \; \Phi^{\mathcal G}_\mu (B_\mu) - \partial_{A_\mu} \Phi^{\mathcal G}_\mu(B_\mu)\right),
\end{equation}
using the notation (\ref{direction-deriv}).
\end{lemma}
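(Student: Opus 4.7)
The strategy is to reduce the identity to a pointwise computation at a single $\mu \in P^\infty_+(M)$, using two facts already in the text: $\mathcal{A}^{\mathcal{G}}$ is tensorial in all three arguments, and $\nabla^{({\rm m})}$ is flat with parallel vector fields being precisely the $\nabla^{({\rm m})}$-constant sections $(\mu,\nu_0)$. Thus I may replace $A,B,C$ by their $\nabla^{({\rm m})}$-parallel extensions (whose values at every nearby measure remain equal to $A_\mu,B_\mu,C_\mu\in S^\infty_0(M)$), for which $\nabla^{({\rm m})}_A B = \nabla^{({\rm m})}_A C = 0$. Then (\ref{Amari-Tensor}) collapses to
$$\mathcal{A}^{\mathcal{G}}(A;B,C)_\mu \;=\; -A(\mathcal{G}(B,C))_\mu \;=\; -\frac{d}{dt}\bigg|_{t=0}\mathcal{G}_{\mu+tA_\mu}(B_\mu,C_\mu),$$
the directional derivative along the straight line $t\mapsto \mu+tA_\mu$ in the affine structure of $P^\infty_+(M)$.

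The main computation is to differentiate this expression. Substituting (\ref{regular-metric}) and expanding (\ref{FR}) with respect to a fixed background smooth volume form, the integrand becomes a quotient of three densities: the density of $\Phi^{\mathcal{G}}_{\mu+tA_\mu}(B_\mu)$ times that of $C_\mu$, divided by that of $\mu+tA_\mu$. The $t$-derivative at zero splits, by the quotient rule, into a numerator contribution and a denominator contribution. Rewriting both in Radon--Nikodym form, the numerator contribution equals $\llangle \partial_{A_\mu}\Phi^{\mathcal{G}}_\mu(B_\mu),\, C_\mu\rrangle^{\rm FR}_\mu$ by (\ref{direction-deriv}), while the denominator contribution, using the elementary identity $\dot p_0/p_0 = dA_\mu/d\mu$, equals $-\llangle (dA_\mu/d\mu)\,\Phi^{\mathcal{G}}_\mu(B_\mu),\, C_\mu\rrangle^{\rm FR}_\mu$, where $(dA_\mu/d\mu)\,\Phi^{\mathcal{G}}_\mu(B_\mu)$ denotes the signed measure obtained by multiplying $\Phi^{\mathcal{G}}_\mu(B_\mu)$ by the smooth function $dA_\mu/d\mu$.

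Subtracting the two contributions and converting the resulting Fisher--Rao pairing into a $\mathcal{G}$-pairing via $\llangle \eta,C_\mu\rrangle^{\rm FR}_\mu = \mathcal{G}_\mu\bigl((\Phi^{\mathcal{G}}_\mu)^{-1}(\eta),C_\mu\bigr)$ identifies $\mathcal{K}^{\mathcal{G}}(A,B)_\mu$ with the expression stated in (\ref{define-K2}). The one subtlety I anticipate is that the signed measure $\eta = (dA_\mu/d\mu)\Phi^{\mathcal{G}}_\mu(B_\mu)-\partial_{A_\mu}\Phi^{\mathcal{G}}_\mu(B_\mu)$ lies a priori only in $S^\infty(M)$: the second summand has vanishing total mass (differentiate $\int_M\Phi^{\mathcal{G}}_\mu(B_\mu)=0$ in $\mu$), but the first has total mass equal to $\mathcal{G}_\mu(A_\mu,B_\mu)$, which is generically nonzero. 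Since the component of $\eta$ along $\mathbb{R}\cdot\mu$ pairs trivially with every $C_\mu\in S^\infty_0(M)$ under the Fisher--Rao pairing, projecting $\eta$ onto $S^\infty_0(M)$ before applying $(\Phi^{\mathcal{G}}_\mu)^{-1}$ leaves the identity intact, and (\ref{define-K})--(\ref{define-K2}) follows. This bookkeeping around the affine hyperplane $S^\infty_0(M)\subset S^\infty(M)$ is the only technical obstacle; the differentiation itself is routine.
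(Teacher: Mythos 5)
Your proof follows essentially the same route as the paper's: reduce to $\nabla^{(\mathrm m)}$-parallel (constant) vector fields via tensoriality, differentiate the Fisher--Rao pairing along the affine line $t \mapsto \mu + tA_\mu$, and split the derivative into the $\partial_{A_\mu}\Phi^{\mathcal G}_\mu(B_\mu)$ contribution and the $(dA_\mu/d\mu)\,\Phi^{\mathcal G}_\mu(B_\mu)$ contribution coming from the varying reference measure. Your closing observation --- that the argument of $(\Phi^{\mathcal G}_\mu)^{-1}$ in (\ref{define-K2}) has total mass $\mathcal G_\mu(A_\mu,B_\mu)$ and therefore needs to be projected onto $S^\infty_0(M)$ first (harmlessly, since $\mathbb R\cdot\mu$ pairs to zero with $S^\infty_0(M)$ under the Fisher--Rao product) --- is a legitimate well-definedness point that the paper's own proof passes over silently.
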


\begin{proof}
Let $\mu \in P^\infty_+(M)$ and $a, b, c \in C^\infty_0(M, \mu)$, so that the constant vector fields $A : \equiv a \mu$, $B :\equiv b\mu$ and $C :\equiv c \mu$ on $P^\infty_+(M)$ are $\nabla^{({\rm m})}$-parallel, i.e., $\nabla^{({\rm m})}_A B = \nabla^{({\rm m})}_A C = 0$. Then (\ref{Amari-Tensor}) implies
\begin{eqnarray*}
{\mathcal A}^{\mathcal G}(A; B, C) \; & = & \; -A \llangle B, C \rrangle^{\mathcal G}_\mu = -\left. \frac d{dt} \right|_{t=0} \llangle \Phi^{\mathcal G}_{(1+ta)\mu} (B), C \rrangle^{\rm FR}_{(1+ta)\mu}\\
& = & \; -\left. \frac d{dt} \right|_{t=0} \int_M \frac{d \Phi^{\mathcal G}_{(1 + t a)\mu} (B)}{d((1 + t a) \mu)} \frac{dC}{d((1 + ta) \mu)} \; d((1 + ta) \mu)\\
& = & \; \int_M \left( a \; \frac{d \Phi^{\mathcal G}_\mu (B)}{d\mu} - \frac{d(\partial_{a \mu}\Phi^{\mathcal G}_\mu(B))}{d\mu} \right)  \frac{dC}{d\mu} \; d\mu\\
& = & \; \left\llangle \left(\frac{dA_\mu}{d\mu} \; \Phi^{\mathcal G}_\mu (B_\mu) - \partial_{A_\mu} \Phi^{\mathcal G}_\mu(B_\mu)\right) , C_\mu 
\right\rrangle_\mu^{\rm FR},
\end{eqnarray*}
which together with (\ref{regular-metric}) implies (\ref{define-K2}).
\end{proof}

\begin{definition} Let $\mathcal G$ be a regular metric on $P^\infty_+(M)$, given by (\ref{regular-metric}) for some map $\phi_\mu^{\mathcal G}: S^\infty_0(M) \to S^\infty_0(M)$. Then for $\alpha \in {\mathbb R}$ we define the {\em $\alpha$-connection of $\mathcal G$ }by
\begin{equation} \label{alpha-connection}
\nabla^{({\mathcal G}, \alpha)}_A B := \nabla^{({\rm m})}_A B - \frac{\alpha+1}2 {\mathcal K}^{\mathcal G}(A, B).
\end{equation}
\end{definition}

Observe that according to (\ref{alpha-connection}) we have ${\mathcal K}^{\mathcal G}(A, B) = \nabla^{({\rm m})}_A B - \nabla^{({\mathcal G}, 1)}_A B$, whence (\ref{alpha-connection}) is equivalent to
\[
\nabla^{({\mathcal G}, \alpha)}_A B \; = \; \frac{1 - \alpha}2\; \nabla^{({\rm m})}_A B + \frac{1 + \alpha}2\; \nabla^{({\mathcal G}, 1)}_A B.
\]
We now collect the following properties of these $\alpha$-connections.

\begin{theorem}
\begin{enumerate}
\item Equation (\ref{alpha-connection}) defines indeed a connection of $P^\infty_+(M)$.
\item The torsion of $\nabla^{({\mathcal G}, \alpha)}$ is given by
\begin{equation} \label{Tor-alpha}
{\rm Tor}^{({\mathcal G}, \alpha)}(A, B)\; = \; \frac{\alpha+1}2\; {\rm Tor}^{({\mathcal G}, 1)}(A, B)\; = \; \frac{\alpha+1}2\; \big({\mathcal K}^{\mathcal G}(B, A) - {\mathcal K}^{\mathcal G}(A, B)\big).
\end{equation}
That is, apart from $\nabla^{(-1)} = \nabla^{({\rm m})}$, either {\em all }or {\em none }of the $\nabla^{(\alpha)}$, $\alpha \neq -1$, have torsion.
\item $\nabla^{({\mathcal G}, \alpha)}$ and $\nabla^{({\mathcal G}, -\alpha)}$ are conjugate w.r.t. ${\mathcal G}$, i.e., we have the identity
\[
A\; {\mathcal G}(B, C) = {\mathcal G}\left(\nabla^{({\mathcal G}, \alpha)}_A B, C\right) + {\mathcal G}\left( B, \nabla^{({\mathcal G}, -\alpha)}_A C \right)
\]
for all vector fields $A, B, C$ on $P^\infty_+(M)$. In particular, $\nabla^{({\mathcal G}, 0)}$ is compatible with ${\mathcal G}$, and the connection $\nabla^{({\mathcal G}, 1)}$ has vanishing curvature.
\item The Levi-Civita connection of ${\mathcal G}$ is given as
\begin{equation} \label{LC}
\nabla^{LC}_A B = \nabla^{({\rm m})}_A B - \frac12 {\mathcal K}^{\mathcal G}(A, B) - \frac12 {\mathcal K}^{\mathcal G}(B, A) + \frac12 {\mathcal D}^{\mathcal G}(A, B),
\end{equation}
where ${\mathcal D}^{\mathcal G}$ is the symmetric tensor determined by the identity
\begin{equation} \label{DG-tensor}
{\mathcal G}({\mathcal D^{\mathcal G}}(A, B), C) \; = \; {\mathcal A}^{\mathcal G}(C; A, B).
\end{equation}
\end{enumerate}
\end{theorem}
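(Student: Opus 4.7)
The plan is to verify the four assertions in turn, leaning throughout on two ingredients: the tensorial character of $\mathcal{K}^{\mathcal{G}}$ and $\mathcal{A}^{\mathcal{G}}$ established in the preceding lemma, and the flatness (both torsion- and curvature-freeness) of $\nabla^{({\rm m})}$, whose geodesics are straight segments in the affine hyperplane.

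Assertion (1) is essentially formal: since $\mathcal{A}^{\mathcal{G}}$ is tensorial and $\mathcal{G}$ non-degenerate, (\ref{define-K}) forces $\mathcal{K}^{\mathcal{G}}$ to be $C^\infty$-linear in each slot, so adding a scalar multiple of it to $\nabla^{({\rm m})}$ again produces a connection. For (2), substituting (\ref{alpha-connection}) into ${\rm Tor}(A,B) = \nabla_A B - \nabla_B A - [A,B]$ and using ${\rm Tor}^{({\rm m})}=0$ immediately leaves only the antisymmetric part of $-\tfrac{\alpha+1}{2}\mathcal{K}^{\mathcal{G}}$, which is (\ref{Tor-alpha}); the dichotomy is clear because the prefactor $\tfrac{\alpha+1}{2}$ vanishes only at $\alpha=-1$.

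For (3) I would compute $\mathcal{G}\bigl(\nabla^{(\mathcal{G},\alpha)}_A B,C\bigr) + \mathcal{G}\bigl(B,\nabla^{(\mathcal{G},-\alpha)}_A C\bigr)$ by expanding via (\ref{alpha-connection}) and converting each $\mathcal{G}(\mathcal{K}^{\mathcal{G}}(X,Y),Z)$ into $\mathcal{A}^{\mathcal{G}}(X;Y,Z)$ through (\ref{define-K}); using the symmetry of $\mathcal{A}^{\mathcal{G}}(A;\cdot,\cdot)$ in its last two entries, the coefficients $\tfrac{\alpha+1}{2}$ and $\tfrac{1-\alpha}{2}$ add to $1$ and produce a single term $-\mathcal{A}^{\mathcal{G}}(A;B,C)$, which by (\ref{Amari-Tensor}) reassembles the remaining $\nabla^{({\rm m})}$-pieces into $A\,\mathcal{G}(B,C)$. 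Specializing $\alpha=0$ yields compatibility of $\nabla^{(\mathcal{G},0)}$. For the vanishing curvature of $\nabla^{(\mathcal{G},1)}$ I would differentiate the conjugacy identity a second time to obtain the standard duality formula $\mathcal{G}\bigl(R^{(\mathcal{G},\alpha)}(A,B)C,D\bigr) + \mathcal{G}\bigl(C,R^{(\mathcal{G},-\alpha)}(A,B)D\bigr) = 0$, and observe that $\nabla^{(\mathcal{G},-1)} = \nabla^{({\rm m})}$ is flat, so that non-degeneracy of $\mathcal{G}$ forces $R^{(\mathcal{G},1)}=0$.

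For (4) I would first verify that $\mathcal{D}^{\mathcal{G}}$ is well-defined: the map $C\mapsto\mathcal{A}^{\mathcal{G}}(C;A,B)$ is tensorial and $\mathcal{G}$ non-degenerate, so $\mathcal{D}^{\mathcal{G}}(A,B)$ is uniquely determined, and its $(A,B)$-symmetry follows from that of $\mathcal{A}^{\mathcal{G}}(C;\cdot,\cdot)$ proved in the lemma. Denoting the right-hand side of (\ref{LC}) by $\widetilde{\nabla}_A B$, each of the three corrective terms is manifestly $(A,B)$-symmetric, so $\widetilde{\nabla}$ inherits torsion-freeness from $\nabla^{({\rm m})}$. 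Metric compatibility is then checked by expanding $\mathcal{G}(\widetilde{\nabla}_A B,C)+\mathcal{G}(B,\widetilde{\nabla}_A C)$: six $\mathcal{A}^{\mathcal{G}}$-terms appear (two from each $\mathcal{K}^{\mathcal{G}}$-symmetrization and one from each $\mathcal{D}^{\mathcal{G}}$-symmetrization), four cancel pairwise by the symmetry of $\mathcal{A}^{\mathcal{G}}$ in its last two slots, and the remaining contribution $-\mathcal{A}^{\mathcal{G}}(A;B,C)$ combines with the $\nabla^{({\rm m})}$-pieces via (\ref{Amari-Tensor}) to give $A\,\mathcal{G}(B,C)$; uniqueness of the Levi-Civita connection closes the argument. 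I expect the main obstacle to be precisely this last bookkeeping: the three patterns $\mathcal{A}^{\mathcal{G}}(A;B,C)$, $\mathcal{A}^{\mathcal{G}}(B;A,C)$, $\mathcal{A}^{\mathcal{G}}(C;A,B)$ must cancel in exactly the right way, and this is what forces $\mathcal{D}^{\mathcal{G}}$ (rather than a further symmetrized $\mathcal{K}^{\mathcal{G}}$) into the formula.
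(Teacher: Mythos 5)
Your proposal is correct and follows essentially the same route as the paper: the paper deduces tensoriality of ${\mathcal K}^{\mathcal G}$ from that of ${\mathcal A}^{\mathcal G}$ to get assertion (1), defers assertions (2)--(4) to ``straightforward calculations'' in the style of Amari--Nagaoka, and obtains the vanishing curvature of $\nabla^{({\mathcal G},1)}$ exactly as you do, namely from its being conjugate to the flat mixture connection $\nabla^{({\rm m})}$. You have simply written out the bookkeeping (the coefficient identity $\tfrac{\alpha+1}{2}+\tfrac{1-\alpha}{2}=1$, the cancellation pattern of the six ${\mathcal A}^{\mathcal G}$-terms, and the reassembly via the defining identity of ${\mathcal A}^{\mathcal G}$) that the paper omits.
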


\begin{proof} Since ${\mathcal A}^{\mathcal G}$ is tensorial, (\ref{define-K}) immediately implies that ${\mathcal K}^{\mathcal G}$ is tensorial in both entries as well. From this, it is straightforward to verify the connection properties for $\nabla^{({\mathcal G}, \alpha)}$.

The remaining statements follow by straightforward calculations, cf. \cite[ch. 3]{AN06}. In particular, $\nabla^{({\mathcal G}, 1)}$ is conjugate to $\nabla^{({\mathcal G}, -1)} = \nabla^{({\rm m})}$ w.r.t. ${\mathcal G}$ and $\nabla^{({\rm m})}$ is flat. Being conjugate to a flat connection, $\nabla^{({\mathcal G}, 1)}$ has vanishing curvature as well.
\end{proof}

\begin{corollary} \label{Geodesics}
For a regular metric ${\mathcal G}$ on $P^\infty_+(M)$ the following are equivalent.
\begin{enumerate}
\item $\nabla^{({\mathcal G}, \alpha)}$ is torsion free for all $\alpha \in {\mathbb R}$.
\item $\nabla^{({\mathcal G}, 0)}$ coincides with the Levi-Civita connection $\nabla^{LC}$ of ${\mathcal G}$.
\item $\nabla^{({\mathcal G}, 0)}$ and $\nabla^{LC}$ have the same geodesics.
\end{enumerate}
\end{corollary}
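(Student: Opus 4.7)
The plan is to establish the implication cycle $(1) \Rightarrow (2) \Rightarrow (3) \Rightarrow (1)$, since $(2) \Rightarrow (3)$ is immediate. The two nontrivial implications reduce to algebraic identities relating $\mathcal{K}^{\mathcal{G}}$, $\mathcal{D}^{\mathcal{G}}$, and the symmetry properties of the cubic tensor $\mathcal{A}^{\mathcal{G}}$, which by construction is already symmetric in its last two arguments.

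For $(1) \Rightarrow (2)$ I would first use part 2 of the Theorem to translate torsion freeness of all $\alpha$-connections into the symmetry $\mathcal{K}^{\mathcal{G}}(A,B) = \mathcal{K}^{\mathcal{G}}(B,A)$, which via (\ref{define-K}) is equivalent to full symmetry of $\mathcal{A}^{\mathcal{G}}$ in all three arguments. Feeding this into (\ref{DG-tensor}) gives
\[
\mathcal{G}(\mathcal{D}^{\mathcal{G}}(A,B), C) \;=\; \mathcal{A}^{\mathcal{G}}(C;A,B) \;=\; \mathcal{A}^{\mathcal{G}}(A;B,C) \;=\; \mathcal{G}(\mathcal{K}^{\mathcal{G}}(A,B), C)
\]
for all $C$, hence $\mathcal{D}^{\mathcal{G}}(A,B) = \mathcal{K}^{\mathcal{G}}(A,B)$. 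Substituting into (\ref{LC}) and using the symmetry of $\mathcal{K}^{\mathcal{G}}$ collapses the Levi-Civita formula to $\nabla^{(m)}_A B - \tfrac12 \mathcal{K}^{\mathcal{G}}(A,B) = \nabla^{(\mathcal{G},0)}_A B$, which is (2).

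The main content lies in $(3) \Rightarrow (1)$. My starting point is the well-known fact that two affine connections share their parametrized geodesics if and only if their difference tensor is antisymmetric, as one sees by polarizing the identity $\nabla_{\dot\gamma}\dot\gamma = \widetilde{\nabla}_{\dot\gamma}\dot\gamma$. Applied to
\[
D(A,B) \;:=\; \nabla^{LC}_A B - \nabla^{(\mathcal{G},0)}_A B \;=\; \tfrac12 \bigl( \mathcal{D}^{\mathcal{G}}(A,B) - \mathcal{K}^{\mathcal{G}}(B,A) \bigr),
\]
together with the symmetry of $\mathcal{D}^{\mathcal{G}}$, antisymmetry of $D$ forces $2\,\mathcal{D}^{\mathcal{G}}(A,B) = \mathcal{K}^{\mathcal{G}}(A,B) + \mathcal{K}^{\mathcal{G}}(B,A)$. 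Pairing with a third vector field $C$ through $\mathcal{G}$ and using (\ref{DG-tensor}) and (\ref{define-K}) converts this into the scalar identity
\[
2\,\mathcal{A}^{\mathcal{G}}(C;A,B) \;=\; \mathcal{A}^{\mathcal{G}}(A;B,C) + \mathcal{A}^{\mathcal{G}}(B;A,C),
\]
valid for all $A, B, C$.

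The step I expect to be the main obstacle is extracting from this averaged identity, together with the last-two-slot symmetry, the full $S_3$-symmetry of $\mathcal{A}^{\mathcal{G}}$. My approach is to write $f:=\mathcal{A}^{\mathcal{G}}$, use the last-two-slot symmetry on the second summand to bring the identity to the form $2f(C,A,B) = f(A,B,C) + f(B,C,A)$, then apply the cyclic substitution $(A,B,C)\mapsto(B,C,A)$ and subtract the resulting equations; a short linear manipulation yields the cyclic symmetry $f(A,B,C) = f(C,A,B)$. Since a $3$-cycle together with a single transposition generates $S_3$, combining this with the symmetry in the last two slots makes $\mathcal{A}^{\mathcal{G}}$ fully symmetric, so $\mathcal{K}^{\mathcal{G}}$ is symmetric, and part 2 of the Theorem delivers (1).
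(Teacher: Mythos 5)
Your proposal is correct, and its main implication $(3)\Rightarrow(1)$ follows the same route as the paper: both reduce the hypothesis, via polarization of $\nabla^{LC}_{\dot\gamma}\dot\gamma=\nabla^{({\mathcal G},0)}_{\dot\gamma}\dot\gamma$, to the identity $2\,{\mathcal A}^{\mathcal G}(C;A,B)={\mathcal A}^{\mathcal G}(A;B,C)+{\mathcal A}^{\mathcal G}(B;A,C)$ and then deduce total symmetry of ${\mathcal A}^{\mathcal G}$, with you spelling out the ``standard tensor calculations'' that the paper leaves implicit. The only divergence is in $(1)\Leftrightarrow(2)$: the paper notes that $\nabla^{({\mathcal G},0)}$ is ${\mathcal G}$-compatible and hence equals $\nabla^{LC}$ iff it is torsion free, while you substitute the total symmetry into the explicit formula (\ref{LC}); both are valid, and your one-directional version of this step suffices for the implication cycle you set up.
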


\begin{proof} As $\nabla^{({\mathcal G}, 0)}$ is compatible with ${\mathcal G}$, it coincides with $\nabla^{LC}$ iff $\nabla^{({\mathcal G}, 0)}$ is torsion free, which by (\ref{Tor-alpha}) is equivalent to the first statement. So the first two statements are equivalent, and second statement obviously implies the third. For the converse, let $\gamma_t$ be a common geodesic of $\nabla^{({\mathcal G}, 0)}$ and $\nabla^{LC}$. Then
\[
0 = (\nabla^{LC}_{\dot \gamma_t} \dot \gamma_t) - (\nabla^{({\mathcal G}, 0)}_{\dot \gamma_t} \dot \gamma_t) \stackrel{(\ref{LC})} = - \frac12 {\mathcal K}^{\mathcal G}(\dot \gamma_t, \dot \gamma_t) + \frac12 {\mathcal D}^{\mathcal G}(\dot \gamma_t, \dot \gamma_t).
\]
Thus for any tangent vector $C$,
\[
0 = {\mathcal G}({\mathcal K}^{\mathcal G}(\dot \gamma_t, \dot \gamma_t), C) - {\mathcal G}({\mathcal D}^{\mathcal G}(\dot \gamma_t, \dot \gamma_t), C) \stackrel{(\ref{Tor-alpha}), (\ref{DG-tensor})} = {\mathcal A}^{\mathcal G}(\dot \gamma_t; \dot \gamma_t, C) - {\mathcal A}^{\mathcal G}(C; \dot \gamma_t, \dot \gamma_t).
\]
Since this holds for {\em all }geodesics and hence for {\em all tangent vectors $\dot \gamma_t$}, polarization implies 
\[
0 = {\mathcal A}^{\mathcal G}(A; B, C) + {\mathcal A}^{\mathcal G}(B; A, C) - 2 {\mathcal A}^{\mathcal G}(C; A, B).
\]
From this, standard tensor calculations imply that ${\mathcal A}^{\mathcal G}$ is totally symmetric and hence, ${\mathcal K}^{\mathcal G}$ is symmetric by (\ref{define-K}), so that (\ref{Tor-alpha}) implies the first statement.
\end{proof}

\noindent {\bf Example: The Fisher-Rao metric}

According to (\ref{regular-metric}) the Fisher-Rao metric is a regular metric with $\Phi^{\rm FR}_\mu = {\mathbf 1}$ for all $\mu$. Thus, it follows that $\partial_A \Phi^{\rm FR}_\mu = 0$, whence letting $A_\mu := a\mu$, $B_\mu := b\mu$ and $C_\mu := c \mu$ we have according to (\ref{define-K2})
\begin{equation} \label{K-FR}
{\mathcal K}^{\rm FR}_\mu(a \mu, b \mu) = \frac{dA_\mu}{d\mu} \; B_\mu = ab \; \mu = {\mathcal K}^{\rm FR}_\mu(b\mu, a\mu),
\end{equation}
so that by (\ref{Amari-Tensor}),
\[
{\mathcal A}^{\rm FR}_\mu(a\mu, b\mu, c\mu) = \llangle ab \; \mu, c\; \mu \rrangle_\mu^{\rm FR} \; = \; \int_M abc\;{d\mu} .
\]
Thus, ${\mathcal A}^{\rm FR}$ coincides with the classical {\em Amari-Chentsov tensor }on $P^\infty_+(M)$.

That is, the $\alpha$-connections $\nabla^{({\rm FR}, \alpha)}$ from (\ref{alpha-connection}) for the Fisher-Rao metric coincide with the standard $\alpha$-connections defined e.g. in \cite[ch. 3]{AN06} or \cite{AJLS}. They are torsion free as ${\mathcal K}^{\rm FR}$ is symmetric by (\ref{K-FR}). In particular, $\nabla^{({\rm FR},0)} = \nabla^{\rm LC}$ is the Levi-Civita-Connection of the Fisher-Rao metric, and $\nabla^{({\rm FR},1)} =: \nabla^{({\rm e})}$ is the {\em exponential connection}, i.e., the flat connection conjugate to $\nabla^{({\rm m})}$ \cite{AN06}, \cite{AJLS}.

Thus, the $\alpha$-connections from (\ref{alpha-connection}) may be regarded as a canonical generalization of the $\alpha$-connections of the Fisher-Rao metric to arbitrary regular metrics.

\section{The Otto metric} \label{section-Otto}

We shall now present another important regular Riemannian metric on $P^\infty_+(M)$, called the {\em Otto metric}, as it was first introduced by F. Otto in \cite{Otto}. The dual connection of $\nabla^{({\rm m})}$ w.r.t. this metric was considered by the first named author in \cite{Ay}, where it was denoted by $\nabla^{({\rm e}_1)}$.

Let $(M, {\rm g})$ be a compact Riemannian manifold without boundary. We shall use the notation ${\rm g}(\xi, \zeta) =: \langle \xi, \zeta \rangle$ for $\xi, \zeta \in T_xM$. Let ${\mathcal T}(M)$ denote the Lie algebra of vector fields on $M$. Given $f \in C^\infty(M)$, the {\em gradient vector field of $f$ (w.r.t. ${\rm g}$) }is the vector field ${\rm grad}(f) \in {\mathcal T}(M)$ determined by the property
\begin{equation} \label{grad-g0}
\langle {\rm grad} \, f, X \rangle = X(f) \qquad \text{for any vector field $X \in \mathcal{T}(M)$}.
\end{equation}

In order to construct the Otto metric, we need some properties of the {\em divergence of vector fields}, and the {\em Laplace operator}, of which we only collect the most important facts; for a more comprehensive explanation, cf. \cite{Ay}.

\begin{proposition} Let $(M, {\rm g})$ be as above and $\mu \in P^\infty_+(M)$.
\begin{enumerate}
\item For a vector field $X \in {\mathcal T}(X)$ there is a unique function ${\rm div}_\mu(X) \in C^\infty_0(M, \mu)$, called the {\em divergence of $X$ w.r.t. $\mu$}, satisfying
\begin{equation} \label{divergence}
\int_M X(f)\; d\mu = - \int_M f \; {\rm div}_\mu(X)\; d\mu \qquad \text{for all $f \in C^\infty(M)$}.
\end{equation}
\item For $f \in C^\infty(M)$ define the {\em $\mu$-Laplacian of $f$ }as
\begin{equation} \label{Laplace}
\Delta_\mu f := {\rm div}_\mu ({\rm grad} f).
\end{equation}
Then $\Delta_\mu f \in C^\infty_0(M, \mu)$ and $\ker \Delta_\mu$ consists of the constant functions.
\item The restriction $\Delta_\mu: C^\infty_0(M, \mu) \to C^\infty_0(M, \mu)$ is an isomorphism of Fr\'echet spaces.
\end{enumerate}
\end{proposition}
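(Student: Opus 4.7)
The plan is to reduce everything to the standard Riemannian theory by writing $\mu = \rho\,\mathrm{vol}_{\rm g}$ with a strictly positive smooth density $\rho = d\mu/d\mathrm{vol}_{\rm g}$, and then invoke elliptic PDE theory plus the open mapping theorem for Fr\'echet spaces to settle the isomorphism claim.

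For (1), I would write down the explicit candidate $\mathrm{div}_\mu(X) := \mathrm{div}_{\rm g}(X) + X(\log\rho)$, where $\mathrm{div}_{\rm g}$ denotes the usual Riemannian divergence associated with $\mathrm{vol}_{\rm g}$. Using the product rule $X(f)\rho = X(f\rho) - f\,X(\rho)$ together with the classical divergence theorem on the closed manifold $M$, a direct calculation gives
\[
\int_M X(f)\,d\mu \;=\; \int_M X(f)\,\rho\,d\mathrm{vol}_{\rm g} \;=\; -\int_M f\,\bigl(\mathrm{div}_{\rm g}(X) + X(\log\rho)\bigr)\,d\mu,
\]
proving (\ref{divergence}). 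Testing (\ref{divergence}) with $f\equiv 1$ shows automatically that any such function lies in $C^\infty_0(M,\mu)$, and also yields uniqueness: if two candidates $\psi_1,\psi_2$ both satisfy (\ref{divergence}), then $\psi_1-\psi_2$ is $L^2(M,\mu)$-orthogonal to all of $C^\infty(M)$, hence vanishes by density.

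For (2), applying (1) with $X = \mathrm{grad}\,f$ and test function $f$ itself yields
\[
\int_M f\,\Delta_\mu f\,d\mu \;=\; -\int_M \langle \mathrm{grad}\,f,\, \mathrm{grad}\,f\rangle\,d\mu,
\]
so $\Delta_\mu f = 0$ forces $|\mathrm{grad}\,f|^2 \equiv 0$ on $M$ (since $\rho>0$ everywhere), and connectedness of $M$ then forces $f$ constant. That $\Delta_\mu f \in C^\infty_0(M,\mu)$ is immediate from (1).

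The main obstacle is part (3). Injectivity on $C^\infty_0(M,\mu)$ is immediate from (2), as the only constant with zero $\mu$-integral is $0$. Surjectivity is the analytic heart of the statement: I would rewrite $\Delta_\mu = \rho^{-1}\,\mathrm{div}_{\rm g}(\rho\,\mathrm{grad}\,\cdot\,)$, which is a formally self-adjoint, uniformly elliptic second-order operator on the closed Riemannian manifold $M$. Standard Fredholm theory for elliptic operators between Sobolev spaces then gives a closed image with cokernel naturally isomorphic to $\ker\Delta_\mu = \mathbb{R}$, identifying the image precisely with the $\mu$-mean-zero functions; elliptic regularity promotes any Sobolev-class solution to a smooth one, giving surjectivity of $\Delta_\mu$ onto $C^\infty_0(M,\mu)$. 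Since $C^\infty_0(M,\mu)$ is a closed subspace of the Fr\'echet space $C^\infty(M)$ and $\Delta_\mu$ is continuous, the open mapping theorem for Fr\'echet spaces applied to the continuous bijection $\Delta_\mu : C^\infty_0(M,\mu) \to C^\infty_0(M,\mu)$ then provides continuity of the inverse, completing the isomorphism claim.
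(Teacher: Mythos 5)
Your proof is correct. Note that the paper itself offers no proof of this proposition --- it only states it and refers the reader to \cite{Ay} --- but your argument is the standard one and is fully consistent with the machinery the paper relies on later: your explicit formula ${\rm div}_\mu(X) = {\rm div}_{\rm g}(X) + X(\log\rho)$ is precisely the identity (\ref{divergence-mu}) that the paper quotes from \cite[Proposition 1, (8)]{Ay}, and your reduction of surjectivity to Fredholm theory for the divergence-form elliptic operator $\rho^{-1}{\rm div}_{\rm g}(\rho\,{\rm grad}\,\cdot)$, combined with elliptic regularity and the open mapping theorem for Fr\'echet spaces, is exactly what is needed to justify the isomorphism claim in part (3).
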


It follows that
\begin{equation} \label{Laplace-symm}
-\int_M (\Delta_\mu f)\; g\; d\mu = \int_M ({\rm grad} f)(g)\; d\mu = \int_M \langle {\rm grad} f, {\rm grad} g \rangle\; d\mu.
\end{equation}

With this notation, we define the {\em Otto metric on $P^\infty_+(M)$ }by
\begin{equation} \label{Ottometric}
\llangle a \mu, b \mu \rrangle^{\rm O}_\mu := \int_M \left\langle {\rm grad} (\Delta_\mu^{-1}a ),{\rm grad}\left(\Delta_\mu^{-1} b \right) \right \rangle\; d\mu.
\end{equation}

We assert that this is a regular metric in the sense of Definition \ref{def-regular}. Bilinearity, symmetry are nonnegativity are immediate. Also, note that $\llangle \nu, \nu \rrangle_\mu^{\rm O} = 0$ iff $\nu = c \mu$ for some constant $c \in {\mathbb R}$. But then, $\nu \in S^\infty_0(M)$ implies $c = 0$, whence $\nu = 0$. That is, the Otto metric defines an inner product on each tangent space.

Moreover, comparing (\ref{Laplace-symm}) and ({\ref{Ottometric}), we can rewrite
\begin{equation} \label{Otto}
\llangle a\mu, b \mu \rrangle_\mu^{\rm O} = - \int_M (\Delta_\mu^{-1} a )\; b\; d\mu = \left\llangle -(\Delta_\mu^{-1} a)\; \mu, b \mu \right\rrangle_\mu^{\rm FR}.
\end{equation}
Thus, the Otto metric can be written in the form (\ref{regular-metric}) with
\begin{equation} \label{Phi-O}
\Phi^{\rm O}_\mu (a \mu) := - (\Delta_\mu^{-1} a)\; \mu \qquad \Longrightarrow \qquad \Phi^{\rm O}_\mu = - \imath_\mu \circ \Delta_\mu^{-1} \circ \imath_\mu^{-1}
\end{equation}
with $\imath_\mu$ from (\ref{define-imath-mu}), so that $\Phi^{\rm O}_\mu$ is the composition of Fr\'echet space isomorphisms, showing its regularity. Then the tensor fields ${\mathcal K}^{\rm O}$ from (\ref{define-K2}), ${\mathcal A}^{\rm O}$ from (\ref{define-K}), the torsion of $\nabla^{({\rm O}, \alpha)}$ and the Levi-Civita connection of $\llangle \cdot, \cdot \rrangle^{\rm O}$ are as follows.

\begin{theorem}
Let $(M, {\rm g})$ as before and $\mu \in P^\infty_+(M)$. Then for $a \mu, b \mu \in S^\infty_0(M) \cong T_\mu P^\infty_+(M)$ and $A = a\mu, B = b\mu$ we have
\begin{eqnarray} \label{KO}
{\mathcal K}^{\rm O}_\mu (a \mu, b \mu) \; & = & \; {\mathcal K}^{\rm FR}_\mu(a\mu, b\mu) + \langle {\rm grad}\; (\Delta_\mu^{-1} b), {\rm grad}\; a \rangle\; \mu.\\
\label{AO}
{\mathcal A}^{\rm O}_\mu(a\mu, b\mu, c\mu) \; & = & \; -\left\llangle a \mu, \; \Delta_\mu \langle {\rm grad} (\Delta^{-1}_\mu b), {\rm grad}(\Delta_\mu^{-1} c) \rangle \; \mu \right \rrangle^{\rm O}.\\
\nonumber
{\rm Tor}^{({\rm O}, \alpha)}_\mu(a \mu, b \mu) \; & = & \; \frac{\alpha+1}2 \Big(\langle {\rm grad}\; (\Delta_\mu^{-1} a), {\rm grad}\; b \rangle\\
\label{TorO}
&& \qquad \qquad \qquad \qquad - \langle {\rm grad}\; (\Delta_\mu^{-1} b), {\rm grad}\; a \rangle\Big)\; \mu.\\
\nonumber
\nabla^{LC}_A B \; &= & \; \nabla^{({\rm m})}_A B - \frac12 \Big(2 ab \; + \;  \Delta_\mu \langle {\rm grad} (\Delta^{-1}_\mu a), {\rm grad}(\Delta_\mu^{-1} b) \rangle\\
\label{LC-O}
&& \;\;\langle {\rm grad}\; (\Delta_\mu^{-1} a), {\rm grad}\; b \rangle\; + \;\langle {\rm grad}\; a, {\rm grad}\; (\Delta_\mu^{-1} b) \rangle\Big) \; \mu.
\end{eqnarray}
\end{theorem}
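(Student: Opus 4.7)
The plan is to compute the four quantities in the stated order, each one feeding into the next, with only one genuinely new ingredient: a formula for how $\Delta_\mu$ (equivalently, ${\rm div}_\mu$) depends on $\mu$. All remaining steps are then algebraic substitutions into identities that were already established in the text.

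First I would establish the key lemma
\[
\partial_{a\mu}{\rm div}_\mu(X) \;=\; X(a), \qquad \text{hence} \qquad \partial_{a\mu}\Delta_\mu(f) \;=\; \langle{\rm grad}\,f,\,{\rm grad}\,a\rangle.
\]
To prove it, set $\mu_t := (1+ta)\mu$ and differentiate the defining identity $\int_M X(f)\,d\mu_t = -\int_M f\,{\rm div}_{\mu_t}(X)\,d\mu_t$ at $t=0$; after expanding $aX(f) = X(af) - fX(a)$ and applying (\ref{divergence}) the $\text{div}_\mu(X)$ terms cancel and one reads off the claim from the fact that the resulting identity holds for \emph{every} $f \in C^\infty(M)$. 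This is the one piece of real work; the rest is bookkeeping.

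Next I would compute $\mathcal{K}^{\rm O}$ using (\ref{define-K2}) together with (\ref{Phi-O}). Since the argument of $\Phi^{\rm O}_{\mu_t}$ in (\ref{direction-deriv}) is the \emph{fixed} signed measure $b\mu$, I would rewrite $b\mu = h_t\mu_t$ with $h_t = b/(1+ta)$, so that $\Phi^{\rm O}_{\mu_t}(b\mu) = -\Delta_{\mu_t}^{-1}(h_t)\cdot(1+ta)\mu$. Differentiating in $t$ at $0$ by the product rule and using the key lemma (via $\frac{d}{dt}\bigr|_0 \Delta_{\mu_t}^{-1}(b) = -\Delta_\mu^{-1}(\langle{\rm grad}(\Delta_\mu^{-1}b),{\rm grad}\,a\rangle)$) gives a clean expression for $\partial_{a\mu}\Phi^{\rm O}_\mu(b\mu)$. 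Subtracting it from $a\cdot\Phi^{\rm O}_\mu(b\mu) = -a(\Delta_\mu^{-1}b)\mu$, the two $a\Delta_\mu^{-1}(b)\mu$ terms cancel, and applying $(\Phi^{\rm O}_\mu)^{-1} = -\imath_\mu\circ\Delta_\mu\circ\imath_\mu^{-1}$ cancels the outer $\Delta_\mu^{-1}$, producing exactly (\ref{KO}). (A sanity check: the quantity $ab + \langle{\rm grad}(\Delta_\mu^{-1}b),{\rm grad}\,a\rangle$ has $\mu$-integral zero by (\ref{Laplace-symm}), so all the $\Delta_\mu^{-1}$'s are well defined.)

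For $\mathcal{A}^{\rm O}$ I would use (\ref{define-K}) to write $\mathcal{A}^{\rm O}(a\mu;b\mu,c\mu) = \llangle\mathcal{K}^{\rm O}(a\mu,b\mu),c\mu\rrangle^{\rm O}_\mu$, insert (\ref{KO}), and apply (\ref{Otto}) to rewrite each term as an $L^2(\mu)$-pairing with $\Delta_\mu^{-1}c$. Integration by parts on $\int (\mathrm{grad}(\Delta_\mu^{-1}b))(a)\cdot \Delta_\mu^{-1}c\,d\mu$, using $(\text{grad}\,u)(fg) = (\text{grad}\,u)(f)\,g + f(\text{grad}\,u)(g)$ and $\Delta_\mu(\Delta_\mu^{-1}b)=b$, produces two terms, one of which cancels the $-\int ab\Delta_\mu^{-1}c\,d\mu$ contribution, leaving $\int_M a\,\langle{\rm grad}(\Delta_\mu^{-1}b),{\rm grad}(\Delta_\mu^{-1}c)\rangle\,d\mu$. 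Moving $\Delta_\mu$ onto $\Delta_\mu^{-1}a$ by self-adjointness rewrites this as the right-hand side of (\ref{AO}).

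Finally, (\ref{TorO}) is immediate from (\ref{Tor-alpha}) and (\ref{KO}), because the $\mathcal{K}^{\rm FR}$-contribution is symmetric in $(a,b)$ and drops out. For (\ref{LC-O}) I would plug (\ref{KO}) into (\ref{LC}) to handle the $-\tfrac12(\mathcal{K}^{\rm O}(A,B)+\mathcal{K}^{\rm O}(B,A))$ piece directly, and identify $\mathcal{D}^{\rm O}(a\mu,b\mu) = -\Delta_\mu\langle{\rm grad}(\Delta_\mu^{-1}a),{\rm grad}(\Delta_\mu^{-1}b)\rangle\cdot\mu$ by reading off (\ref{DG-tensor}) from (\ref{AO}) and invoking nondegeneracy of $\llangle\cdot,\cdot\rrangle^{\rm O}$ on $S^\infty_0(M)$. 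Summing the four contributions and collecting terms gives (\ref{LC-O}). The only conceptual obstacle is the derivative identity for $\Delta_\mu$ in the first step; once it is in place, everything else is symbol pushing.
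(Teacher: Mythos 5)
Your proposal is correct and follows essentially the same route as the paper: implicit differentiation of the relation $\Delta_{\mu_t}h_t=-b/(1+ta)$ to get ${\mathcal K}^{\rm O}$, the same integration by parts for ${\mathcal A}^{\rm O}$, and the same identification of ${\mathcal D}^{\rm O}$ fed into (\ref{LC}). The only difference is that you derive the first variation $\partial_{a\mu}\Delta_\mu f=\langle{\rm grad}\,f,{\rm grad}\,a\rangle$ intrinsically from the defining identity (\ref{divergence}), whereas the paper obtains it from the explicit formula (\ref{Delta-mu}) comparing $\Delta_{\mu_t}$ with $\Delta_{\mu_{\rm g}}$; just phrase the intermediate term $\frac{d}{dt}\big|_{0}\Delta_{\mu_t}^{-1}(b)$ with care, since $b\notin C^\infty_0(M,\mu_t)$ for $t\neq 0$ and only the combined expression $\Delta_\mu^{-1}\big(ab+\langle{\rm grad}(\Delta_\mu^{-1}b),{\rm grad}\,a\rangle\big)$ is well defined.
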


\begin{proof} For the proof, we shall use the identity \cite[Proposition 1, (8)]{Ay}
\begin{equation} \label{divergence-mu}
{\rm div}_\mu (X) = {\rm div}_{\mu_{\rm g}} (X) + \left\langle X, {\rm grad}\; \left(\ln \frac{d\mu}{d\mu_{\rm g}} \right)\right\rangle
\end{equation}
which -- substituting $X = {\rm grad}\; h$ for any $h \in C^\infty(M)$ -- yields
\begin{equation} \label{Delta-mu}
\Delta_\mu h = \Delta_{\mu_{\rm g}} h + \left\langle {\rm grad}\; h, {\rm grad}\; \left(\ln \frac{d\mu}{d\mu_{\rm g}} \right) \right\rangle.
\end{equation} 

Let $\mu := \rho \mu_{\rm g}$ and $\mu_t :=  (1 + t a) \mu = (1 + t a) \rho \mu_{\rm g} =: \rho_t \mu_{\rm g}$. Furthermore, let $h_t := -\displaystyle{\Delta_{\mu_t}^{-1} \frac{d(b \mu)}{d\mu_t}}$, so that $\Phi^{\rm O}_{\mu_t}(b \mu) = h_t \mu_t$ and $h_0 = -\Delta_\mu^{-1} b$. Then by (\ref{define-K2}),
\begin{eqnarray} \label{KO1}
\Phi^{\rm O}_\mu {\mathcal K}^{\rm O}_\mu (a \mu, b \mu) \; & = & \; a h_0 \; \mu - \left. \frac d{dt} \right|_{t=0} (h_t \mu_t) = - \left( \left. \frac d{dt} \right|_{t=0} h_t \right)\; \mu.
\end{eqnarray}
 Now
\begin{eqnarray*}
- \frac b{1 + t a} \; & = & \; - \frac{d(b \mu)}{d\mu_t} \; = \;  \Delta_{\mu_t} h_t \; \stackrel{(\ref{Delta-mu})} = \; \Delta_{\mu_{\rm g}} h_t + \left\langle {\rm grad}\; h_t, {\rm grad}\; (\ln \rho_t ) \right\rangle\\
& = & \Delta_{\mu_{\rm g}} h_t + \left\langle {\rm grad}\; h_t, {\rm grad}\; (\ln \rho ) \right\rangle + \left\langle {\rm grad}\; h_t, {\rm grad}\; (\ln (1+ta) ) \right\rangle\\
& \stackrel{(\ref{Delta-mu})} = & \;  \Delta_\mu h_t + \langle {\rm grad}\; h_t, {\rm grad}\; (\ln (1+ta) ) \rangle.
\end{eqnarray*}
Evaluating the derivative $\frac d{dt}$ of this equation at $t = 0$ yields
\[
ab = \Delta_\mu \left. \frac d{dt} \right|_{t=0} h_t + \langle {\rm grad}\; h_0, {\rm grad}\; a \rangle = \Delta_\mu\left. \frac d{dt} \right|_{t=0} h_t - \langle {\rm grad}\; (\Delta_\mu^{-1} b), {\rm grad}\; a \rangle.
\]
Thus,
\[
\left. \frac d{dt} \right|_{t=0} h_t  = \Delta_\mu^{-1}\big(ab + \langle {\rm grad}\; (\Delta_\mu^{-1} b), {\rm grad}\; a \rangle\big),
\]
and substituting this into (\ref{KO1}) and using (\ref{Phi-O}) implies
\begin{equation} \label{KO-prelim}
{\mathcal K}^{\rm O}_\mu(a\mu, b\mu) = \big(ab + \langle {\rm grad}\; (\Delta_\mu^{-1} b), {\rm grad}\; a \rangle\big)\; \mu,
\end{equation}
and using (\ref{K-FR}), this shows (\ref{KO}). Thus, (\ref{TorO}) now follows from (\ref{Tor-alpha}). Furthermore,
\begin{eqnarray*}
{\mathcal A}^{\rm O}(a\mu, b\mu, c\mu) \; & \stackrel{(\ref{define-K})}= & \; \left\llangle c, {\mathcal K}^{\rm O}_\mu(a\mu, b\mu) \right\rrangle^{\rm O}_\mu \; \stackrel{(\ref{Otto})} = \; \left\llangle (-\Delta_\mu^{-1} c), {\mathcal K}^{\rm O}_\mu(a\mu, b\mu) \right\rrangle^{\rm FR}_\mu\\
& \stackrel{(\ref{KO-prelim})}= \; & \int_M \left(-ab (\Delta_\mu^{-1} c) - \langle (\Delta_\mu^{-1} c)\; {\rm grad}\; (\Delta_\mu^{-1} b), {\rm grad}\; a \rangle \right)\; d\mu\\
& \stackrel{(\ref{divergence})}= \; & \int_M \left(-ab (\Delta_\mu^{-1} c) + {\rm div} ((\Delta_\mu^{-1} c)\; {\rm grad}\; (\Delta_\mu^{-1} b))\; a \right)\;  d\mu\\
& \stackrel{(\ast)} = & \; \int_M a \; \langle {\rm grad} (\Delta^{-1}_\mu c), {\rm grad}(\Delta_\mu^{-1} b) \rangle \; d\mu\\
& = & \; \left \llangle a \mu, \; \langle {\rm grad} (\Delta^{-1}_\mu b), {\rm grad}(\Delta_\mu^{-1} c) \rangle \; \mu \right\rrangle^{\rm FR}_\mu,
\end{eqnarray*}
implying (\ref{AO}). Here, at $(\ast)$ we used the identity ${\rm div}(fX) = \langle {\rm grad} f, X \rangle + f {\rm div X}$.
Thus, (\ref{DG-tensor}) implies that
\[
{\mathcal D}^{\rm O}(a\mu, b\mu) = -\Delta_\mu \langle {\rm grad} (\Delta^{-1}_\mu a), {\rm grad}(\Delta_\mu^{-1} b) \rangle \; \mu,
\]
so that (\ref{LC-O}) now follows immediately from (\ref{LC}).
\end{proof}

\begin{remark}
In \cite{Ay}, the conjugate connection $\nabla^{({\rm O}, 1)}$ of $\nabla^{({\rm m})}$ w.r.t. the Otto metric was denoted by $\nabla^{{(\rm e}_1)}$. By our discussion, $\nabla^{{(\rm e}_1)}$ has vanishing curvature, but non-vanishing torsion. Moreover, $\nabla^{({\rm O}, 0)}$ and the Levi-Civita connection $\nabla^{\rm LC}$ of the Otto metric have different geodesics by Corollary \ref{Geodesics}.
\end{remark}

\subsubsection{Acknowledements} The authors have no competing interests to declare that are
relevant to the content of this article.

%
%
%
%

\end{document}